\documentclass[conference]{IEEEtran}
\usepackage[T1]{fontenc}
\usepackage[utf8]{inputenc}
\usepackage{amsmath,amssymb,amsfonts,amsthm}
\usepackage{bbm}
\usepackage{mathrsfs}
\usepackage{xcolor}

\IEEEoverridecommandlockouts

\newtheorem{theorem}{Theorem}
\newtheorem{lemma}{Lemma}
\newtheorem{proposition}{Proposition}
\newtheorem{definition}{Definition}
\newtheorem{remark}{Remark}
\newtheorem{corollary}{Corollary}

\newcommand{\tr}{\mathrm{tr}}
\newcommand{\eins}{\mathbbm{1}}

\title{Deterministic Multi-User Identification over Bosonic Channels\\
\author{\IEEEauthorblockN{Gökhan Elmas, Janis N\"otzel (\emph{Member, IEEE})}
\textit{Emmy-Noether Group Theoretical Quantum Systems Design},\\
\textit{Technical University of Munich, Munich, Germany},\\
\textit{\{gokhan.elmas,janis.noetzel\}@tum.de}}
\thanks{This work was financed by the DFG via grant NO 1129/2-1 and by the Federal Ministry of Education and Research of Germany via grants 16KISQ093, 16KISQ039 and 16KISQ077. The generous support of the state of Bavaria via Munich Quantum Valley, the NeQuS- and the 6GQT project is greatly appreciated. Finally, the authors acknowledge the financial support by the Federal Ministry of Education and Research of Germany in the programme of “Souverän. Digital. Vernetzt.”. Joint project 6G-life, project identification number: 16KISK002. }}

\begin{document}

\maketitle

\begin{abstract}
We study deterministic multi-user identification over bosonic channels using coherent-state signatures. Each user is assigned a coherent product state under an average energy constraint, and identification is performed by a user-specific binary quantum test. In contrast to classical multi-user identification models based on shared codebooks, this formulation associates each receiver with a geometric signature in high-dimensional phase space.
Using metric entropy bounds, we show that the identification capacity exhibits a near-$k \log k$ scaling behavior.

\end{abstract}

\begin{IEEEkeywords}
Identification via channels, bosonic channels, coherent states, multi-user identification, geometric packing, metric entropy
\end{IEEEkeywords}

\section{Introduction}

Identification via channels, introduced by Ahlswede and Dueck, differs fundamentally from ordinary transmission: the receiver is not required to decode the transmitted message, but only to decide whether a given message was sent \cite{AhlswedeDueck1989}. This new definition of communication task led to a substantially different scaling law -- namely, the number of messages that can be \emph{identified} within $k$ transmissions scales doubly exponential. In order to achieve this doubly exponential scaling, a randomized encoding scheme was utilized. When instead a deterministic strategy is pursued, the number of bits per channel use turned out to be of the order $k\log k$. Later work extended identification to feedback, strong converses, and joint transmission settings \cite{AhlswedeDueck1989Feedback,HanVerdu1992,hayashi}. More recently, identification has received great attention in both classical and quantum settings.

On the quantum side, Winter established quantum identification as a distinct information-theoretic task and extended it to settings with prior correlation and feedback \cite{Winter2004QCMIQC, Winter2006PriorCorrelationFeedback}, while Hayden and Winter connected it to weak decoupling \cite{HaydenWinter2012}. A second strand studies robust and secure identification for classical-quantum channels. Boche, Deppe, and Winter showed that for compound cq-channels the identification capacity matches the transmission capacity, while for wiretap cqq models a sharp dichotomy emerges between zero and positive secrecy capacity \cite{BocheDeppeWinter2019}. This made the cq setting both operationally meaningful and mathematically tractable.

The recent classical literature also developed deterministic identification, where no local randomness is available. Salariseddigh, Pereg, Boche, and Deppe characterized deterministic identification under power constraints, showing in particular that the Gaussian capacity is infinite \cite{SalariseddighPeregBocheDeppe2022}. This line of work was extended to Poisson, MIMO broadcast, and finite-output channels, revealing superlinear scaling laws \cite{SalariseddighJamaliPeregBocheDeppeSchober2023,RosenbergerPeregDeppe2023MIMO,Colomer2025Identification}. In parallel, Rosenberger, Deppe, and Pereg carried identification ideas into quantum broadcast channels \cite{RosenbergerDeppePereg2023QuantumBC}, while Colomer, Deppe, Boche, and Winter introduced a quantum hypothesis-testing lemma for deterministic identification over quantum channels \cite{Colomer2025Identification}. Modern converse techniques and Gaussian scaling laws are summarized in \cite{BruecheMrossZhaoLabidiDeppeJorswieck2024,colomerGaussian}. Moreover, under heterodyne detection the bosonic identification problem reduces to a deterministic identification problem over a classical Gaussian channel.

Bosonic channels are central to optical communications, where coherent states provide a canonical and experimentally relevant signaling family \cite{holevoBook,Weedbrook2012}. In this paper we study a deterministic multi-user identification model directly in terms of coherent-state geometry. Rather than designing a shared codebook and asking each receiver to decode or identify components of a transmitted tuple, we associate each receiver with a dedicated coherent-state signature in high-dimensional phase space. Identification is then performed via a binary test against that signature.

The key idea is geometric. The distinguishability of coherent-state signatures under displaced thermal noise decays exponentially in the squared Euclidean distance between their amplitude vectors. Consequently, assigning sufficiently separated signatures yields exponentially small false-alarm probabilities, reducing the problem to packing in $\mathbb{C}^k \cong \mathbb{R}^{2k}$, where squared distance controls the error exponent.

Our main contribution is an explicit deterministic construction together with a matching converse in the coherent-state signature model. On the achievability side, we combine displaced thermal-state bounds with metric entropy estimates for Euclidean balls to construct signatures with controlled pairwise separation. On the converse side, we show that any identification scheme with
uniformly controlled first- and second-kind errors induces a packing
of the same Euclidean ball, whose cardinality is upper bounded by a
covering number estimate. These two arguments together imply the order-optimal scaling law
\begin{align}
\log M_k = k\log k - k\log\log k + \mathcal O(k)
\end{align}
when the identification errors are required to vanish polynomially.

\section{Model}
We utilize the standard notation of quantum optics as in \cite{serafiniBook} and \cite{holevoBook}. The Fock space is denoted as $\mathcal F$, and displaced thermal states as $S_N(\alpha)$. The value $N\geq0$ is the number of noise photons and $\alpha\in\mathbb C$ the displacement value. In a communication setting, $\alpha$ models the received signal and $N$ additive noise. Positive-operator valued measurements (POVMs) $\{D_m\}_{m=1}^M$ model the actions allowed at the receiver side. They satisfy $D_m\geq0$ for all $m=1,\ldots,M$ and $\sum_mD_m=\eins$. To model our communication setting, we consider transmission over an idealized bosonic channel. In this setting, each user $m$ is assigned a physical signature modeled by a coherent product state 
\begin{align}
|\alpha_m^k\rangle
=
|\alpha_{m,1}\rangle \otimes \cdots \otimes |\alpha_{m,k}\rangle,
\end{align}
where $\alpha_m^k=(\alpha_{m,1},\ldots,\alpha_{m,k})\in\mathbb{C}^k$. The signature of each user is supposed to satisfy the average energy constraint
\begin{equation}
\label{eq:power-constraint}    
\|\alpha_m^k\|^2
=
\sum_{t=1}^k |\alpha_{m,t}|^2
\le k\cdot E,
\qquad
m\in[M_k],
\end{equation}
for some fixed $E>0$. To identify user $m$, the sender transmits the coherent state $|\alpha_m^k\rangle$. However, the receiver only receives its signature under some added level of thermal noise, so that instead of making a measurement on $|\alpha_m^k\rangle$ itself it must operate on the displaced thermal states $S_N^k(\alpha_m^k):=S_N(\alpha_{m,1})\otimes\ldots\otimes S_N(\alpha_{m,k})$ instead: Each user $m$ performs its individual binary quantum test $
\{\Pi_m,\, \eins-\Pi_m\},
$ where $0\leq\Pi_m\leq\eins$. The outcome associated with $\Pi_m$ is interpreted as ``yes, user $m$ was called''. As will become clear later, a capacity-achieving test for user $m$ is given by $\Pi_m=|\alpha^k_m\rangle\langle\alpha^k_m|$. This leads to the following definition of code:
\begin{definition}[Code]
An $(k,M_k,\lambda_{1,k},\lambda_{2,k})$ deterministic multi-user identification scheme consists of signatures $\{|\alpha_m^k\rangle\}_{m=1}^{M_k}$ satisfying $\|\alpha_m^k\|^2 \le k\cdot E$
and binary tests $\{\Pi_m, \eins-\Pi_m\}$ such that
\begin{align}
1-\tr [S_N^k(\alpha_m^k)\Pi_m ] &\le \lambda_{1,k},
\label{eq:first_kind}
\\
\tr[S_N^k(\alpha_{m'}^k)\Pi_m]&\le \lambda_{2,k},
\qquad m'\neq m.
\label{eq:second_kind}
\end{align}
\end{definition}
\begin{remark}
This is a multi-user identification model in which each receiver tests only its own identity. The model is deterministic because there is no encoder randomization.
\end{remark}
Due to our restriction to coherent states with energy constraint
$\|\alpha_m^k\|^2 \le k\cdot E$, it is natural to approach the construction
of a code via packings and coverings of the Euclidean ball of radius
$\sqrt{k\cdot E}$ in $\mathbb{C}^k$. A similar approach has already been successfully used in \cite{colomerGaussian}, and in our case the major difference is in the decoder, which is conceptually very simple and allows us to prove a coding theorem with matching achievability- and converse part in the first order.
\section{Result and Sketch of Proof}

\begin{theorem}[Explicit multi-user identification bound]\label{thm:main}
Fix $E>0$. For every $k\in\mathbb{N}$ and every $\rho_k>0$ satisfying
\begin{align}
2\rho_k \le \sqrt{k\cdot E},
\end{align}
there exists a code such that
\begin{align}
M_k
\ge
\left(\frac{\sqrt{k\cdot E}}{2\rho_k}\right)^{2k}
=
\left(\frac{k\cdot E}{4\rho_k^2}\right)^k,
\end{align}
with errors of first- and second kind given by
\begin{align}
\lambda_{1,k}\leq2^{-k\cdot\Lambda(\delta,N)},\qquad \lambda_{2,k}\le e^{-4\rho_k^2\cdot\Theta(\delta,N)}
\end{align}
where $\Lambda(\delta,N):=(N+\delta)\ln(\tfrac{N+\delta}{N})-(N+\delta+1)\ln(\tfrac{N+\delta+1}{N+1})$ and $\Theta(\delta,N):=\frac{1-(N+1)^{-1/(N+\delta)}}{N+1-N(N+1)^{-1/(N+\delta)}}$.
\end{theorem}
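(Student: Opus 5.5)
The proof combines a volumetric packing argument, which gives the cardinality bound, with a photon-number threshold test, which gives the error bounds. The key observation about the decoder is that the pure-state test $\Pi_m=|\alpha_m^k\rangle\langle\alpha_m^k|$ cannot yield the stated $\lambda_{1,k}$: it accepts $S_N^k(\alpha_m^k)$ only with probability $(N+1)^{-k}$. I would instead use the displaced photon-number window
\begin{align}
\Pi_m := D(\alpha_m^k)\, P_{\le L}\, D(\alpha_m^k)^\dagger,\qquad L:=k(N+\delta),
\end{align}
where $D(\cdot)$ is the $k$-mode displacement operator and $P_{\le L}$ projects onto total photon number $\hat n_{\mathrm{tot}}\le L$. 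When $N=0$ this reduces to the coherent-state projector advertised in the Model section. The shape of $\Lambda$ and $\Theta$ supports this choice: $\Lambda(\delta,N)$ is exactly the relative entropy between geometric (thermal) distributions with means $N+\delta$ and $N$.

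\emph{Step 1 (packing).} Identify $\mathbb{C}^k\cong\mathbb{R}^{2k}$ and take a maximal $2\rho_k$-separated subset $\{\alpha_m^k\}$ of the ball $B$ of radius $R=\sqrt{kE}$. By maximality, the balls of radius $2\rho_k$ around the points cover $B$. Comparing volumes gives
\begin{align}
M_k\,(2\rho_k)^{2k}\ge R^{2k},
\end{align}
which is the claimed cardinality bound. The assumption $2\rho_k\le R$ only ensures that this bound is nontrivial. Every pair of distinct signatures then satisfies $\|\alpha_m^k-\alpha_{m'}^k\|^2\ge 4\rho_k^2$.

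\emph{Step 2 (first kind).} Since $D(\alpha)^\dagger S_N(\alpha)D(\alpha)=S_N(0)$, the photon counts of the undisplaced thermal state are $k$ i.i.d. geometric random variables with mean $N$. The first-kind error is therefore $\Pr[\sum_t n_t> k(N+\delta)]$. A Chernoff bound, with the tilt parameter optimized, gives $e^{-k\Lambda(\delta,N)}$, and this is at most $2^{-k\Lambda(\delta,N)}$.

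\emph{Step 3 (second kind).} Set $\gamma=\alpha_{m'}^k-\alpha_m^k$. For any $0<s\le1$ we have the operator inequality $P_{\le L}\le s^{\hat n_{\mathrm{tot}}-L}$. Combining it with the Gaussian identity
\begin{align}
\tr\big[S_N(\gamma_j)\,s^{\hat n}\big]=\frac{1}{N+1-Ns}\exp\Big(-|\gamma_j|^2\frac{1-s}{N+1-Ns}\Big)
\end{align}
and multiplying over the $k$ modes gives
\begin{align}
\tr[S_N^k(\alpha_{m'}^k)\Pi_m]\le \Big(\frac{s^{-(N+\delta)}}{N+1-Ns}\Big)^k \exp\Big(-\|\gamma\|^2\frac{1-s}{N+1-Ns}\Big).
\end{align}
Choosing $s=(N+1)^{-1/(N+\delta)}$ turns the coefficient of $\|\gamma\|^2$ into exactly $\Theta(\delta,N)$. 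Inserting $\|\gamma\|^2\ge4\rho_k^2$ from Step 1 then gives $e^{-4\rho_k^2\Theta}$.

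\emph{Main obstacle.} The difficulty is the prefactor $\big((N+1)/(N+1-Ns)\big)^k$, which is at least $1$ and does not visibly cancel. I would first try to absorb it in one of two ways. One option is to reserve a small fraction of the exponent, using $4\rho_k^2\gg k$ in the regime of interest, where $\rho_k^2\sim k/\log k$ gives only $k/\log k$ and is too weak. The other is to re-optimize $s$, or to replace the Chernoff step by an exact evaluation of the noncentral distribution of $\hat n_{\mathrm{tot}}$ under a displaced thermal state. If neither works, I would prove the statement with $\lambda_{2,k}\le c(N,\delta)^k e^{-4\rho_k^2\Theta}$. That version still suffices for the $k\log k$ scaling once $\rho_k^2$ grows faster than $k$. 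Checking this constant carefully is where I expect the real work to lie.
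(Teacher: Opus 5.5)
Your route is the paper's own: a maximal $2\rho_k$-packing with the volumetric covering bound for $M_k$, and the displaced photon-number window $D(\alpha_m^k)P_{k,\delta}D(\alpha_m^k)^\dagger$ as the test, which is what the paper's Proposition~1 actually uses. The error bounds come from Chernoff bounds via the geometric generating function, and your tilt $s=(N+1)^{-1/(N+\delta)}$ is exactly where the paper's $\Theta$ comes from. Steps 1 and 2 are fine.

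The gap is in Step 3, and you located it correctly. The prefactor $\bigl(s^{-(N+\delta)}/(N+1-Ns)\bigr)^k$ exceeds $1$ for \emph{every} $s\in(0,1)$. Indeed, $s\mapsto s^{-(N+\delta)}-(N+1-Ns)$ is convex, vanishes at $s=1$, and has slope $-\delta$ there. At $s=1$ the exponent coefficient $\frac{1-s}{N+1-Ns}$ vanishes as well. So no re-optimization of the tilt gives $e^{-\Theta\|\gamma\|^2}$ uniformly in $\|\gamma\|$. The paper does not close this either. Its choice $s^*(N+\delta)=\ln(N+1+Ne^{-s^*})$ reads $s^{-(N+\delta)}=N+1+Ns$ in your notation, which leaves the factor $\bigl(\frac{N+1+Ns}{N+1-Ns}\bigr)^k>1$.

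In fact the gap cannot be closed, neither by exact evaluation nor by a better test. For fixed $\delta>0$, the claimed bound on $\lambda_{2,k}$ is false when $\rho_k^2/k$ is small.
\begin{itemize}
\item \emph{A close pair always exists.} Points of $B_{2k}(R)$ with pairwise distances $\ge r$ number at most $(1+2R/r)^{2k}$. Hence any code with $M_k\ge(R/2\rho_k)^{2k}$ has a pair with $\|\gamma\|\le 4\rho_k/(1-2\rho_k/R)$, so $\|\gamma\|^2/k$ is small whenever $\rho_k^2/k$ is.
\item \emph{Your test fails.} The total photon number of $S_N^k(\gamma)$ has mean $kN+\|\gamma\|^2$ and variance $kN(N+1)+(2N+1)\|\gamma\|^2$. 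So once $\|\gamma\|^2\le(\delta-\eta)k$, Chebyshev gives $\lambda_{2,k}\to1$, whereas $e^{-4\rho_k^2\Theta}\to0$.
\item \emph{Every test fails.} A passive linear-optical unitary fixes $S_N(0)^{\otimes k}$ and rotates $\gamma$. This reduces the pair to $S_N(0)^{\otimes k}$ versus $S_N(g)^{\otimes k}$ with $g^2=\|\gamma\|^2/k$. Since $D(S_N(g)\|S_N(0))=g^2\ln\frac{N+1}{N}$, the strong converse of quantum Stein's lemma shows that $\lambda_{1,k}\le2^{-k\Lambda}$ forces $\lambda_{2,k}\to1$ once $g^2\ln\frac{N+1}{N}\le\Lambda\ln2-\eta$.
\end{itemize}
This regime includes the choice $\rho_k^2\propto\log k$ behind the paper's $k\log k-k\log\log k$ law.

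Your fallback does not rescue this. The bound $c(N,\delta)^k e^{-4\rho_k^2\Theta}$ is valid, but it is below $1$ only if $\rho_k^2\gtrsim k$. In that regime $M_k\ge(kE/4\rho_k^2)^k$ is merely exponential. ``$\rho_k^2$ growing faster than $k$'' violates $2\rho_k\le\sqrt{kE}$ and makes the cardinality bound less than $1$. Superexponential $M_k$ needs $\rho_k^2=k^{o(1)}$, and there the prefactor dominates.

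What your computation does prove is the stated bound under an extra hypothesis $4\rho_k^2\ge C(\delta,N)\,k$. Take a tilt $s_1<(N+1)^{-1/(N+\delta)}$. Its coefficient exceeds $\Theta$ because $\frac{1-s}{N+1-Ns}$ is decreasing in $s$, and the surplus absorbs the prefactor.
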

While Theorem \ref{thm:main} shows the existence of codes, the statement of a capacity theorem will in addition need a converse proof, which we provide below:

\begin{theorem}[Converse bound]\label{thm:converse}
Consider any deterministic multi-user identification scheme
in the coherent-state signature model satisfying
\begin{equation}
\|\alpha_m^k\|^2 \le k\cdot E, \qquad m \in [M_k],
\end{equation}
and suppose that the error probabilities satisfy
\begin{equation}
\max\{\lambda_{1,k}, \lambda_{2,k}\} \le \delta_k,
\qquad 0 < \delta_k < \tfrac{1}{2}.
\end{equation}
Then
\begin{equation}
M_k
\le
\left(
1 +
\frac{4\sqrt{k\cdot E}}{\sqrt{(2N+1)\log\!\bigl(1/(4\delta_k)\bigr)}}
\right)^{2k}.
\end{equation}
\end{theorem}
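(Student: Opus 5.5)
The plan is to turn any identification scheme with small errors into a packing of the ball of radius $\sqrt{kE}$ in $\mathbb{C}^k\cong\mathbb{R}^{2k}$, and then bound the packing by volume.

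\textbf{Step 1: small errors force distinguishable states.} Fix $m\neq m'$. The single operator $\Pi_m$ with $0\le\Pi_m\le\eins$ is a feasible test between $S_N^k(\alpha_m^k)$ and $S_N^k(\alpha_{m'}^k)$. Hence
\begin{align}
\tfrac12\|S_N^k(\alpha_m^k)-S_N^k(\alpha_{m'}^k)\|_1\ \ge\ \tr[S_N^k(\alpha_m^k)\Pi_m]-\tr[S_N^k(\alpha_{m'}^k)\Pi_m]\ \ge\ 1-2\delta_k .
\end{align}
By the Fuchs--van de Graaf inequality, the fidelity then satisfies $F\le 1-(1-2\delta_k)^2\le 4\delta_k$. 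The same bound holds for the squared-fidelity convention, since $1-(1-2\delta_k)^2=4\delta_k-4\delta_k^2$.

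\textbf{Step 2: fidelity controls distance.} I need an upper bound on the distinguishability of two displaced thermal states in terms of their displacement difference. The Bhattacharyya/Uhlmann fidelity between $S_N(\alpha)$ and $S_N(\beta)$ is a Gaussian overlap with equal covariance $(2N+1)/2$ per quadrature. Computed via the Wigner functions, the overlap is at least $\exp(-|\alpha-\beta|^2/(2N+1))$, possibly up to constant factors in the exponent depending on convention. Fidelity is multiplicative under tensor products, so
\begin{align}
F\bigl(S_N^k(\alpha_m^k),S_N^k(\alpha_{m'}^k)\bigr)\ \ge\ \exp\!\bigl(-c\,\|\alpha_m^k-\alpha_{m'}^k\|^2/(2N+1)\bigr)
\end{align}
for an explicit constant $c$. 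Combining this with Step 1 gives $\|\alpha_m^k-\alpha_{m'}^k\|\ge \sqrt{(2N+1)\log(1/(4\delta_k))/c}=:r$. The constant $c$ and the base of the logarithm must be tracked so that the final denominator reads exactly $\sqrt{(2N+1)\log(1/(4\delta_k))}$ with numerator $4\sqrt{kE}$. This is the \emph{main obstacle}. The lower bound on the fidelity of displaced thermal states in the right normalisation (e.g.\ $\tr\sqrt{\sqrt\rho\sigma\sqrt\rho}$ versus the quantum Chernoff/Bhattacharyya quantity $\tr\rho^{1/2}\sigma^{1/2}$) has to be done carefully. I would try the bound $\tr\rho^{1/2}\sigma^{1/2}\le F$ together with the Gaussian formula for $\tr\rho^{1/2}\sigma^{1/2}$ for equal-covariance thermal states. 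Its exponent is $|\alpha-\beta|^2$ divided by a quantity at most of order $2N+1$, and slack is available since the claim only needs the right order.

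\textbf{Step 3: volume bound.} With all pairwise distances at least $r$, the signatures form an $r$-separated set in the ball $B_{2k}(\sqrt{kE})\subset\mathbb{R}^{2k}$. The balls of radius $r/2$ around them are disjoint and lie in $B_{2k}(\sqrt{kE}+r/2)$. Comparing volumes gives
\begin{align}
M_k\le\Bigl(1+\tfrac{2\sqrt{kE}}{r}\Bigr)^{2k}.
\end{align}
Absorbing the constant $c$ into the factor (e.g.\ $c\le 4$ gives $2/r\le 4/\sqrt{(2N+1)\log(1/(4\delta_k))}$) yields the stated bound. The condition $\delta_k<1/2$ only ensures $1-2\delta_k>0$ in Step 1. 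If $4\delta_k\ge1$ the bound is vacuous or infinite, which is consistent with the statement.
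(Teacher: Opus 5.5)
Your chain of steps is the paper's. The test $\Pi_m$ lower-bounds the trace distance between $S_N^k(\alpha_m^k)$ and $S_N^k(\alpha_{m'}^k)$ by $1-\lambda_{1,k}-\lambda_{2,k}$. Fuchs--van de Graaf turns this into a fidelity bound, the fidelity of displaced thermal states turns that into a minimum distance $r_k$, and a packing bound in $B_{2k}(\sqrt{kE})$ finishes. Your Step 3 is sharper than the paper's. The paper goes through $P(X,r_k)\le N(X,r_k/2)\le(1+4\sqrt{kE}/r_k)^{2k}$, which has no slack. Your disjoint-balls comparison gives $(1+2\sqrt{kE}/r_k)^{2k}$, and this factor $2$ is exactly what makes any $c\le 4$ acceptable.

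What is wrong as written is the fidelity bookkeeping between Steps 1 and 2, which mixes two conventions.
\begin{itemize}
\item Fuchs--van de Graaf gives $F\le 1-(1-2\delta_k)^2$ only for the squared fidelity $F=(\tr|\sqrt\rho\sqrt\sigma|)^2$. For the root fidelity it gives only $\sqrt F\le\sqrt{1-(1-2\delta_k)^2}\le 2\sqrt{\delta_k}$. Pure states at trace distance $1-2\delta_k$ have $\sqrt F=\sqrt{1-(1-2\delta_k)^2}>1-(1-2\delta_k)^2$, so Step 1 controls the squared fidelity, not the root fidelity.
\item Conversely, $\tr\rho^{1/2}\sigma^{1/2}\le F$ holds only for the root fidelity. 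For your very states, $\tr\rho^{1/2}\sigma^{1/2}=\exp(-\|\alpha^k-\beta^k\|^2/(\sqrt N+\sqrt{N+1})^2)$, which for $N>0$ is strictly larger than the squared fidelity $\exp(-\|\alpha^k-\beta^k\|^2/(2N+1))$.
\end{itemize}
So the chain $\tr\rho^{1/2}\sigma^{1/2}\le F\le 4\delta_k$ you are heading toward is invalid.

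Either of two repairs works.
\begin{itemize}
\item[(a)] Do what the paper does and use the exact squared fidelity $F(S_N^k(\alpha^k),S_N^k(\beta^k))=\exp(-\|\alpha^k-\beta^k\|^2/(2N+1))$. This is the equal-covariance Gaussian fidelity formula; $N=0$ recovers $|\langle\alpha|\beta\rangle|^2$. It gives $r_k^2\ge(2N+1)\ln(1/(4\delta_k))$, i.e.\ $c=1$, so your ``main obstacle'' disappears.
\item[(b)] Keep your Bhattacharyya route but pair it with the root-fidelity bound. From $\exp(-\|\alpha^k-\beta^k\|^2/(\sqrt N+\sqrt{N+1})^2)\le\sqrt F\le 2\sqrt{\delta_k}$ and $(\sqrt N+\sqrt{N+1})^2\ge 2N+1$ you get $r_k^2\ge\tfrac12(2N+1)\ln(1/(4\delta_k))$. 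That is $c=2$, which your Step 3 absorbs.
\end{itemize}

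On the base of the logarithm you flagged: it comes out natural. The paper's factor-$4$ route needs $\log=\ln$, whereas your factor-$2$ gain with $c\le 2$ would even cover $\log_2$.
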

Theorems \ref{thm:main} and \ref{thm:converse} allow us to state the following:
\begin{corollary}
    For the noisy bosonic classical-quantum channel $\alpha\to S_N(\alpha)$ we have 
    \begin{align}
        C_{ID} = 1.
    \end{align}
\end{corollary}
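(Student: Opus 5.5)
The plan is to read $C_{ID}$ as the optimal superlinear rate
\begin{align}
C_{ID}=\lim_{k\to\infty}\frac{\log M_k}{k\log k},
\end{align}
where the limit runs over codes whose errors $\lambda_{1,k},\lambda_{2,k}$ tend to zero. Here we require polynomial decay, as in the introduction. We then prove matching lower and upper bounds on $\liminf$ and $\limsup$ from Theorem~\ref{thm:main} and Theorem~\ref{thm:converse} respectively.

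\emph{Achievability.} Fix $\delta>0$, so that $\Lambda(\delta,N)>0$ and $\Theta(\delta,N)>0$; both are positive because the expressions are Chernoff/KL-type exponents of a thermal distribution shifted by $\delta$. Choose $\rho_k^2=c\ln k$ with a constant $c>0$. The condition $2\rho_k\le\sqrt{kE}$ then holds for all large $k$. Theorem~\ref{thm:main} gives $\lambda_{1,k}\le 2^{-k\Lambda(\delta,N)}$, which is exponentially small, and
\begin{align}
\lambda_{2,k}\le e^{-4c\Theta(\delta,N)\ln k}=k^{-4c\Theta(\delta,N)},
\end{align}
which vanishes polynomially; by choosing $c$ large we get any prescribed polynomial rate. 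It also gives
\begin{align}
\log M_k\ge k\log\frac{kE}{4c\ln k}=k\log k-k\log\log k+\mathcal O(k).
\end{align}
Dividing by $k\log k$ yields $\liminf_k \log M_k/(k\log k)\ge 1$.

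\emph{Converse.} For any sequence of codes with $\max\{\lambda_{1,k},\lambda_{2,k}\}\le\delta_k\to0$, we eventually have $\delta_k\le 1/8$. Hence $\log(1/(4\delta_k))\ge 1$, and Theorem~\ref{thm:converse} gives
\begin{align}
M_k\le\Bigl(1+\tfrac{4\sqrt{kE}}{\sqrt{2N+1}}\Bigr)^{2k}.
\end{align}
Taking logarithms,
\begin{align}
\log M_k\le 2k\log\bigl(\sqrt{k}\,(4\sqrt{E}+o(1))\bigr)=k\log k+\mathcal O(k),
\end{align}
so $\limsup_k\log M_k/(k\log k)\le1$. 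This argument only needs $\delta_k$ to stay below $1/8$, so the converse is in fact strong. If $\delta_k$ decays polynomially, say $\delta_k = k^{-a}$, the denominator grows like $\sqrt{\log k}$ and sharpens the bound to $k\log k-k\log\log k+\mathcal O(k)$, matching the achievability. This refinement is not needed for the first-order statement.

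Combining the two bounds gives $C_{ID}=1$. The only delicate point is bookkeeping in the achievability step. We must check that $\Lambda(\delta,N)$ and $\Theta(\delta,N)$ are strictly positive for fixed $\delta>0$. We must also check that the choice $\rho_k^2\asymp\ln k$ makes $\lambda_{2,k}\to0$ while costing only a $k\log\log k$ term in the rate. Here $\rho_k$ trades separation, and hence error, against packing size. If $\rho_k$ grew polynomially in $k$, the rate would drop below $1$. If $\rho_k$ stayed bounded, $\lambda_{2,k}$ would not vanish. The $\ln k$ choice is what balances these, so it is the step to verify carefully.
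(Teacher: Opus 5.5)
You follow the paper's route exactly: there, too, $C_{ID}=1$ comes from inserting $\rho_k^2\asymp\log k$ into Theorem~\ref{thm:main} and pairing it with Theorem~\ref{thm:converse}. Your converse half is fine. One harmless slip: the fidelity formula forces the natural logarithm, so you need $\delta_k\le 1/(4e)$ rather than $1/8$ to get $\log(1/(4\delta_k))\ge 1$.

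The gap is exactly the step you flagged as the one to verify, and it fails. Take the test $\Pi_m=D(\alpha_m^k)P_{k,\delta}D(\alpha_m^k)^\dagger$ and set $\Delta=\alpha_{m'}^k-\alpha_m^k$. Then $\tr[S_N^k(\alpha_{m'}^k)\Pi_m]=\mathbb P(\mathcal S_k\le k(N+\delta))$, where under $S_N^k(\Delta)$ the total photon number has mean $kN+\|\Delta\|^2$ and variance $kN(N+1)+(2N+1)\|\Delta\|^2$. By the covering bound in the proof of Theorem~\ref{thm:converse}, any code with $M_k\ge(kE/(4\rho_k^2))^k$ contains a pair with $\|\Delta\|\le 8\rho_k(1+o(1))$. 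For $\rho_k^2=c\ln k$, Chebyshev then shows the second-kind error of that pair is at least $1-O(1/k)$, not at most $k^{-4c\Theta(\delta,N)}$. So the second-kind bound of Theorem~\ref{thm:main} is false whenever $\rho_k^2=o(k)$.

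The error sits in the proof of Lemma~\ref{lem:tailbounds}. In the lower-tail Chernoff bound, the factor $\exp\bigl(k[s(N+\delta)-\ln(N+1-Ne^{-s})]\bigr)$ exceeds $1$ for every $s>0$. This is because $s\mapsto\ln(N+1-Ne^{-s})$ vanishes at $0$ and is concave with slope $N<N+\delta$ there. The positive root $s^*$ invoked in the paper exists only because the sign of $Ne^{-s}$ is flipped. With the correct sign, the bound becomes nontrivial only once $\|\gamma^k\|^2>k\delta$.

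Hence the achievability half is established neither in your proposal nor in the paper. The photon-counting test still gives something: use threshold $kN+t_k$ with $\sqrt{k}\ll t_k$ and separation $\|\Delta\|^2\ge 2t_k$, so $\rho_k\gtrsim k^{1/4+\epsilon}$. This yields only $\log M_k\approx\tfrac12 k\log k$ with polynomially vanishing errors.

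For $N=0$ you can repair the argument with the coherent projector $\Pi_m=|\alpha_m^k\rangle\langle\alpha_m^k|$. Then $\lambda_{1,k}=0$ and, by Lemma~1, $\lambda_{2,k}\le e^{-4\rho_k^2}$, so $\rho_k^2=c\ln k$ together with the converse gives $C_{ID}=1$. For $N>0$ that projector has $\lambda_{1,k}=1-(N+1)^{-k}\to 1$. Reaching rate $1$ there would require a genuinely different test or code structure, which neither you nor the paper supplies. What the available arguments actually support for $N>0$ is $\tfrac12\le C_{ID}\le 1$.
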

\begin{remark}
    As can be seen from Theorem \ref{thm:main}, the separation parameter $\rho_k$ controls the tradeoff between code size and error probabilities: increasing $\rho_k$ improves the second-kind error exponent while decreasing the achievable number of users.
\end{remark}
While Theorem \ref{thm:converse} follows from a packing argument for Euclidean balls
combined with the Fuchs--van de Graaf inequality, the proof of Theorem \ref{thm:main}
relies on several key steps, which we outline below before providing details. First, we need an overlap formula for coherent states:
\begin{lemma} For all $\alpha^k, \beta^k \in \mathbb{C}^k$,
\begin{equation}
\operatorname{tr}\!\bigl(S_N^k(\alpha^k)\,|\beta^k\rangle\langle\beta^k|\bigr)
\le
\exp\!\left(-\frac{\|\alpha^k-\beta^k\|^2}{N+1}\right).
\end{equation}
\end{lemma}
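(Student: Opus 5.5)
The plan is to compute the overlap exactly and only then discard a prefactor. The computation reduces to a single mode, and on one mode the quantity is the Husimi $Q$-function of a thermal state. Since $S_N^k(\alpha^k)=\bigotimes_{t=1}^k S_N(\alpha_t)$ and $|\beta^k\rangle\langle\beta^k|=\bigotimes_{t=1}^k|\beta_t\rangle\langle\beta_t|$, the trace factorizes as
\begin{align}
\operatorname{tr}\bigl(S_N^k(\alpha^k)\,|\beta^k\rangle\langle\beta^k|\bigr)=\prod_{t=1}^k \langle\beta_t|S_N(\alpha_t)|\beta_t\rangle .
\end{align}
It therefore suffices to prove the single-mode identity $\langle\beta|S_N(\alpha)|\beta\rangle=\frac{1}{N+1}\exp\bigl(-\frac{|\alpha-\beta|^2}{N+1}\bigr)$.

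For one mode, write $S_N(\alpha)=D(\alpha)\tau_N D(\alpha)^\dagger$, where $D(\alpha)$ is the Weyl displacement operator. The thermal state is
\begin{align}
\tau_N=\frac{1}{N+1}\sum_{n\ge0}\Bigl(\frac{N}{N+1}\Bigr)^n|n\rangle\langle n| .
\end{align}
By the Weyl relations, $D(\alpha)^\dagger|\beta\rangle=e^{i\varphi}|\beta-\alpha\rangle$ for some phase $\varphi$. The phase cancels in the expectation value, so with $\gamma:=\beta-\alpha$ we get $\langle\beta|S_N(\alpha)|\beta\rangle=\langle\gamma|\tau_N|\gamma\rangle$.

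Next use the Fock expansion of the coherent state, $|\langle n|\gamma\rangle|^2=e^{-|\gamma|^2}|\gamma|^{2n}/n!$. Substituting this gives an exponential series:
\begin{align}
\langle\gamma|\tau_N|\gamma\rangle=\frac{e^{-|\gamma|^2}}{N+1}\sum_{n\ge0}\frac{1}{n!}\Bigl(\frac{N|\gamma|^2}{N+1}\Bigr)^n=\frac{1}{N+1}\exp\Bigl(-\frac{|\gamma|^2}{N+1}\Bigr).
\end{align}
Taking the product over $t$ yields
\begin{align}
\operatorname{tr}\bigl(S_N^k(\alpha^k)\,|\beta^k\rangle\langle\beta^k|\bigr)=(N+1)^{-k}\exp\Bigl(-\frac{\|\alpha^k-\beta^k\|^2}{N+1}\Bigr).
\end{align}
Since $N\ge0$, the prefactor satisfies $(N+1)^{-k}\le1$, and dropping it gives the claimed inequality.

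No step here is a serious obstacle. The only points needing care are the phase bookkeeping in the displacement step, which is harmless because the phase cancels in $\langle\cdot|\cdot|\cdot\rangle$, and fixing the convention for $S_N(\alpha)$ as $D(\alpha)\tau_N D(\alpha)^\dagger$ with mean photon number $N$. If the paper's convention differed, say a thermal state displaced with a different variance normalization, only the constant in the exponent would change. The structure of the argument would stay the same: displacement covariance, then the thermal $Q$-function, then tensorization.
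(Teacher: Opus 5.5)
Your proof is correct and follows the same outline as the paper's: factor the trace into single-mode overlaps $\langle\beta_t|S_N(\alpha_t)|\beta_t\rangle$, then bound each one. The two routes differ only in the single-mode step.

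The paper identifies each factor with the fidelity $F(S_N(\alpha_i),|\beta_i\rangle\langle\beta_i|)$ and takes the bound from the Gaussian-state fidelity formula of Marian and Marian, without writing that formula out. You instead compute the overlap directly, using displacement covariance and the Fock expansion of $\tau_N$, i.e.\ the Husimi function of the thermal state.

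Your route is elementary and self-contained, and it avoids any question of which fidelity convention is meant. It also gives the exact value $(N+1)^{-k}\exp\bigl(-\|\alpha^k-\beta^k\|^2/(N+1)\bigr)$. That shows what the lemma discards: the factor $(N+1)^{-k}$. This is harmless for the stated inequality but exponentially small in $k$ when $N>0$. For instance, setting $\beta^k=\alpha^k$ shows that the projector test $|\alpha_m^k\rangle\langle\alpha_m^k|$ accepts the correct signature only with probability $(N+1)^{-k}$.

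The paper's citation route is shorter. It also places the lemma within the Gaussian fidelity framework that the paper reuses in the converse for pairs of displaced thermal states.
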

\begin{IEEEproof}
By \cite{Marian_2007} and with $F$ denoting Fidelity,
\begin{align}
    \tr\bigl(S_N^k(\alpha^k)\,|\beta^k\rangle\langle\beta^k|\bigr) 
    &= \prod_{i=1}^k\tr\bigl(S_N(\alpha_i)\,|\beta_i\rangle\langle\beta_i|\bigr)\\
    &= \prod_{i=1}^kF(S_N(\alpha_i),|\beta_i\rangle\langle\beta_i|)\\
    &\leq \exp\!\left(-\frac{\|\alpha^k-\beta^k\|^2}{N+1}\right).
\end{align}
\end{IEEEproof}

The above identity immediately gives the error probabilities.
\begin{proposition}\label{prop:error-probabilities}
Let the photon-number threshold projector be $
P_{k,\delta}:=\sum_{n^k\in T_{N+\delta}^k} |n_1\rangle\langle n_1|\otimes\cdots\otimes|n_k\rangle\langle n_k|
$ where $T_{N+\delta}^k:=\{n^k:n_1+\ldots+n_k\leq k\cdot(N+\delta)\}$ and define for each user \(m\)
$
\Pi_m := D(\alpha_m^k)\,P_{k,\delta}\,D(\alpha_m^k)^\dagger$. Then the errors of first- and second kind satisfy
\begin{align}
\lambda_{1,k}&\le e^{-k\cdot \Lambda(\delta,N)},\\
\lambda_{2,k}&\le \exp\!\left( -\Theta(\delta,N)\cdot \min_{m\neq m'}\|\alpha_m^k-\alpha_{m'}^k\|^2 \right).
\end{align}
In particular, if the signatures satisfy the separation condition
\begin{equation}
\min_{m \neq m'} \|\alpha_m^k - \alpha_{m'}^k\| \ge 2\rho_k,
\end{equation}
then it follows that
\begin{equation}
\min_{m \neq m'} \|\alpha_m^k - \alpha_{m'}^k\|^2 \ge (2\rho_k)^2 = 4\rho_k^2,
\end{equation}
and therefore
\begin{equation}
\lambda_{2,k} \le \exp\bigl(-4\rho_k^2 \,\Theta(\delta,N)\bigr).
\end{equation}
\end{proposition}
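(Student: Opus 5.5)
The plan is to treat both error kinds through one reduction: by unitary invariance of the trace, every error probability becomes a photon-number tail probability of a displaced thermal state, and each tail is bounded by a Chernoff (generating-function) argument. The covariance $D(\gamma)S_N(\alpha)D(\gamma)^\dagger=S_N(\alpha+\gamma)$ holds mode by mode, so
\begin{align}
\tr\bigl[S_N^k(\alpha_{m'}^k)\Pi_m\bigr]=\tr\bigl[S_N^k(\alpha_{m'}^k-\alpha_m^k)P_{k,\delta}\bigr].
\end{align}
The projector $P_{k,\delta}$ is diagonal in the Fock basis. Hence this quantity equals $\Pr[n_1+\dots+n_k\le k(N+\delta)]$, where the $n_t$ are independent photon counts of $S_N(\beta_t)$ and $\beta^k:=\alpha_{m'}^k-\alpha_m^k$.

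\emph{First kind.} Take $m'=m$, so $\beta^k=0$ and the $n_t$ are i.i.d.\ geometric with mean $N$, i.e.\ $\Pr[n=j]=N^j/(N+1)^{j+1}$. Then $\lambda_{1,k}=\Pr[\sum_t n_t>k(N+\delta)]$. I would apply the upper-tail Chernoff bound $\Pr[\sum n_t\ge ka]\le \bigl(\inf_{z>1}z^{-a}\,\mathbb E[z^{n}]\bigr)^k$ with $a=N+\delta$ and $\mathbb E[z^n]=(1+N(1-z))^{-1}$. The infimum is attained at the tilted geometric law with mean $a$. Its value is $e^{-D}$, where
\begin{align}
D=a\ln\tfrac{a}{N}-(a+1)\ln\tfrac{a+1}{N+1}
\end{align}
is the relative entropy between the geometric laws with means $a$ and $N$. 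This is exactly $\Lambda(\delta,N)$, which gives $\lambda_{1,k}\le e^{-k\Lambda(\delta,N)}$.

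\emph{Second kind.} Now $\beta^k\neq0$, and I need a lower-tail bound. For $0<z<1$ I would use $\Pr[\sum n_t\le ka]\le z^{-ka}\prod_t\mathbb E[z^{n_t}]$. The photon-number generating function of a displaced thermal state is
\begin{align}
\mathbb E[z^{n}]=\frac{1}{1+N(1-z)}\exp\!\Bigl(-\frac{|\beta|^2(1-z)}{1+N(1-z)}\Bigr),
\end{align}
which can be obtained from the $P$-function or the normally ordered characteristic function. Taking the product over modes gives
\begin{align}
\Pr\Bigl[\textstyle\sum_t n_t\le ka\Bigr]\le\Bigl(\frac{z^{-a}}{1+N(1-z)}\Bigr)^k\exp\!\Bigl(-\|\beta^k\|^2\frac{1-z}{N+1-Nz}\Bigr).
\end{align}
Choosing $z=(N+1)^{-1/(N+\delta)}$ makes the coefficient of $\|\beta^k\|^2$ exactly $\Theta(\delta,N)$. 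Taking the worst pair $m\neq m'$ then yields the stated bound. The separation consequence is immediate: $\|\beta^k\|\ge2\rho_k$ gives $\|\beta^k\|^2\ge4\rho_k^2$.

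\emph{Main obstacle.} The prefactor $\bigl(z^{-a}/(1+N(1-z))\bigr)^k$ is the step I expect to fail as written. With the above choice of $z$ it equals $\bigl((N+1)/(N+1-Nz)\bigr)^k$, which is $\ge1$ and so is not obviously removable. I would first try a different $z$ (or a direct comparison argument) that makes the prefactor at most $1$ while keeping the exponent coefficient $\ge\Theta$. If that fails, I would keep the factor $e^{k c(\delta,N)}$ with $c:=\ln\frac{N+1}{N+1-Nz}$ and absorb it. This is harmless in the regime of Theorem~\ref{thm:main}: there $4\rho_k^2\Theta$ must grow superlinearly in $k$ for vanishing errors, so the factor only costs an $\mathcal O(k)$ term in the exponent.
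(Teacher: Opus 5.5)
Your first-kind argument is correct, and cleaner than the paper's tail-bound section. The upper-tail Chernoff bound for i.i.d.\ geometric counts of mean $N$ equals $e^{-kD}$, where $D$ is the relative entropy between the geometric laws of means $N+\delta$ and $N$, and this is exactly $\Lambda(\delta,N)$.

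The gap is the second-kind bound. The obstacle you flagged is fatal, not technical. With $a=N+\delta$, your prefactor equals $z^{-ka}\,\mathbb E_{0}[z^{\mathcal S_k}]$, the expectation taken under the thermal state. Since $x\mapsto z^x$ is convex for $z\in(0,1)$, Jensen gives $\mathbb E_{0}[z^{\mathcal S_k}]\ge z^{kN}$, so the prefactor is at least $z^{-k\delta}>1$ for every admissible $z$. No Chernoff parameter removes it, because the claimed inequality is false. Take $\alpha_m^k=0$ and $\|\alpha_{m'}^k\|^2=bk$ with $0<b<\min\{\delta,E\}$. Under $S_N^k(\alpha_{m'}^k)$, $\mathcal S_k$ has mean $k(N+b)<k(N+\delta)$ and variance $kN(N+1)+(2N+1)bk$. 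Chebyshev then gives $\tr[S_N^k(\alpha_{m'}^k)\Pi_m]\ge 1-O(1/k)$, while the claimed bound $e^{-\Theta(\delta,N)bk}$ tends to $0$. The same Jensen step shows the lower-tail bound is at least $1$ whenever $\|\alpha_m^k-\alpha_{m'}^k\|^2\le k\delta$.

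The paper's own proof has the same hole. It picks $s^*$ from an equation that does not make the prefactor equal to $1$; with the sign corrected to $N+1-Ne^{-s^*}$, that equation has no root $s^*>0$. The value $z=(N+1)^{-1/(N+\delta)}$ that produces $\Theta$ comes from a different equation, $z^{-(N+\delta)}=N+1$.

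Your fallback fails too. Theorem~\ref{thm:main} imposes $2\rho_k\le\sqrt{kE}$, so $4\rho_k^2\Theta\le kE\,\Theta$ is at most linear in $k$, never superlinear. In the regime of the near-$k\log k$ corollary, $4\rho_k^2=4\gamma\log k$, so your bound becomes $e^{kc-4\gamma\Theta\log k}>1$ for large $k$, which is vacuous. What your computation honestly proves is
\[
\lambda_{2,k}\le\Bigl(\frac{z^{-(N+\delta)}}{N+1-Nz}\Bigr)^{k}\exp\Bigl(-\frac{1-z}{N+1-Nz}\,\min_{m\neq m'}\|\alpha_m^k-\alpha_{m'}^k\|^2\Bigr),\qquad z\in(0,1).
\]
This is nontrivial only when the minimum squared distance exceeds $k\delta$. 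At that separation the volumetric packing bound allows only $M_k\le(1+2\sqrt{E/\delta})^{2k}$, i.e.\ $\log M_k=O(k)$. So the proposition cannot be proved in its stated form. Moreover, with the threshold tests $\Pi_m=D(\alpha_m^k)P_{k,\delta}D(\alpha_m^k)^\dagger$ at fixed $\delta$, the near-$k\log k$ achievability does not follow.
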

Proposition \ref{prop:error-probabilities} again rests on tailbounds for the photon number distributions, which are the content of Lemma \ref{lem:tailbounds}, which we prove in Section \ref{sec:tailbounds}.
\begin{lemma}\label{lem:tailbounds}
    Let $\mathcal S_k(n^k):=n_1+\ldots+n_k$ for all $n^k\in\mathbb N^k$. For any $\gamma\in\mathbb C$, let $p$ be the probability distribution on $\mathbb N$ defined via $p(n|\gamma) := \langle n,S_N(\gamma)n\rangle$. For any $\gamma^k\in\mathbb C^k$, let $p^k(n^k|\gamma^k):=p(n_1|\gamma_1)\cdot\ldots\cdot p(n_k|\gamma_k)$ be a probability distribution on $\mathbb N^k$. Then
    \begin{align}
        \mathbb P(\mathcal S_k\geq k\cdot(N+\delta)) &\leq e^{-k\cdot\Lambda(\delta,N)}\label{eqn:tailbound-1}\\
        \mathbb P(\mathcal S_k\leq k\cdot(N+\delta))      &\leq e^{-k\cdot\|\gamma^k\|^2\cdot \Theta(\delta,N)},\label{eqn:tailbound-2}
    \end{align}
    where $\Lambda$ and $\Theta$ are defined in Theorem \ref{thm:main}.
\end{lemma}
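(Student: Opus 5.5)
The plan is to obtain both inequalities from the Chernoff method, using the closed-form photon-number generating function of a displaced thermal state. For a single mode I would first record the generating function, valid for $0<z<1+1/N$:
\begin{align}
G_\gamma(z):=\sum_{n}p(n|\gamma)z^n=\frac{1}{1+N(1-z)}\exp\!\Bigl(-\frac{|\gamma|^2(1-z)}{1+N(1-z)}\Bigr).
\end{align}
This follows from the Glauber--Sudarshan $P$-representation of $S_N(\gamma)$ as a Gaussian mixture of coherent states $|\beta\rangle$ centred at $\gamma$, with variance $N$. Each coherent state has Poisson photon statistics, whose generating function is $e^{-|\beta|^2(1-z)}$, and the remaining Gaussian integral is elementary. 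Because $p^k(\cdot|\gamma^k)$ is a product distribution, $\mathbb E[z^{\mathcal S_k}]=\prod_{t=1}^k G_{\gamma_t}(z)$, and the $\gamma$-dependence enters only through $\|\gamma^k\|^2$.

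For \eqref{eqn:tailbound-1} I would use the bound only in the form in which it is applied in Proposition \ref{prop:error-probabilities}. There the first-kind error is evaluated in the displaced frame, $D(\alpha_m^k)^\dagger S_N^k(\alpha_m^k)D(\alpha_m^k)=S_N^k(0)$. The relevant distribution is therefore the centred one, a product of geometric laws with mean $N$. The bound cannot hold for arbitrary $\gamma^k$, since displacement raises the mean photon number.
\begin{itemize}
\item For $1<z<1+1/N$, Markov's inequality gives $\mathbb P(\mathcal S_k\ge k(N+\delta))\le z^{-k(N+\delta)}(1+N(1-z))^{-k}$.
\item Minimising over $z$ gives the optimiser $z^\ast=\frac{(N+\delta)(N+1)}{N(N+\delta+1)}$.
\item Substituting $z^\ast$ yields $e^{-k\Lambda(\delta,N)}$, where $\Lambda$ is exactly the relative entropy between geometric laws with means $N+\delta$ and $N$. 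This is a routine check.
\end{itemize}

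For \eqref{eqn:tailbound-2} I would use the lower-tail Chernoff bound with $0<z<1$:
\begin{align}
\mathbb P(\mathcal S_k\le k(N+\delta))\le z^{-k(N+\delta)}\bigl(1+N(1-z)\bigr)^{-k}\exp\!\Bigl(-\|\gamma^k\|^2\tfrac{1-z}{1+N(1-z)}\Bigr).
\end{align}
Choosing $z=(N+1)^{-1/(N+\delta)}$ turns the exponent factor into $e^{-\|\gamma^k\|^2\Theta(\delta,N)}$, which matches the definition of $\Theta$. The same choice gives $z^{-k(N+\delta)}=(N+1)^k$, and the middle factor is at most $1$.

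The main obstacle is reconciling the resulting bound $(N+1)^k e^{-\|\gamma^k\|^2\Theta}$ with the stated form $e^{-k\|\gamma^k\|^2\Theta}$. I would first try to show that the prefactor $(N+1)^k$ is dominated in the regime where the lemma is used, namely $\|\gamma^k\|^2\ge 4\rho_k^2$ growing faster than $k$. I would also try to show that it can be absorbed by slightly perturbing $z$ so that the exponent gains a factor. Honestly, I expect the extra factor $k$ in the exponent not to follow from this argument, because the exponent is linear in $\|\gamma^k\|^2$ with an $O(1)$ coefficient. It would have to be justified by the paper's intended normalisation, and I would check that carefully before claiming it.
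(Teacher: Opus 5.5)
For \eqref{eqn:tailbound-2} your proposal stops short, and no argument can finish it: the inequality is false. It also stays false without the factor $k$, that is, in the form $e^{-\|\gamma^k\|^2\Theta(\delta,N)}$ that Proposition~\ref{prop:error-probabilities} actually uses. So the gap is in the statement, not in your approach.

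Your argument for \eqref{eqn:tailbound-1} is correct and follows the paper's route (generating function plus Markov). Your $z^\ast$ reproduces $\Lambda(\delta,N)$ exactly. You are right to restrict to $\gamma^k=0$, which is all Proposition~\ref{prop:error-probabilities} needs and all the paper treats. The paper's displayed bound $e^{-k\|\alpha^k\|^2\Lambda}$ at $\|\alpha^k\|^2=0$ is trivial, so your optimisation supplies the missing computation.

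\textbf{Counterexample to \eqref{eqn:tailbound-2}.} Take $N>0$ and $\gamma^k=(1,0,\ldots,0)$. Then $\mathbb E\,\mathcal S_k=kN+1$ and $\mathrm{Var}\,\mathcal S_k=kN(N+1)+2N+1$. Chebyshev gives $\mathbb P(\mathcal S_k\le k(N+\delta))\to1$, whereas $e^{-k\Theta}\to0$ and $e^{-\Theta}<1$.

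\textbf{Why no choice of $z$ helps.} Your own Chernoff bound shows the mechanism. The function $h(z):=(N+\delta)\ln z+\ln(N+1-Nz)$ is concave with $h(1)=0$ and $h'(1)=\delta$. Hence $z^{-k(N+\delta)}(N+1-Nz)^{-k}=e^{-kh(z)}\ge e^{k\delta(1-z)}$. Since also $N+1-Nz\ge1$, your whole bound is at least $e^{(1-z)(k\delta-\|\gamma^k\|^2)}$. This is $\ge1$ for every $z\in(0,1)$ whenever $\|\gamma^k\|^2\le k\delta$. In words: the mean $kN+\|\gamma^k\|^2$ must exceed the threshold $k(N+\delta)$ before the lower tail is a large deviation.

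\textbf{The paper's proof has the same hole.} Its choice $s^\ast(N+\delta)=\ln(N+1+Ne^{-s^\ast})$ does not cancel the factor $(N+1-Ne^{-s})^{-k}$ that actually appears. The equation that would cancel it, $h(e^{-s})=0$, has no solution $s>0$ by the concavity above. Moreover, the paper's $\Theta$ corresponds exactly to your $z=(N+1)^{-1/(N+\delta)}$, which leaves $\bigl(\tfrac{N+1}{N+1-Nz}\bigr)^k>1$ uncancelled.

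\textbf{Your proposed escape misreads the regime.} Under the energy constraint, $\|\gamma^k\|^2=\|\alpha_{m'}^k-\alpha_m^k\|^2\le4kE$, so it never grows faster than $k$. In the near-$k\log k$ corollary it is only guaranteed to be at least $4\rho_k^2$, with $\rho_k^2=\gamma\log k=o(k)$. For any $\|\gamma^k\|^2=o(k)$, Chebyshev with variance $kN(N+1)+(2N+1)\|\gamma^k\|^2$ gives probability tending to $1$.

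\textbf{What you do prove.} Your computation legitimately yields $\mathbb P(\mathcal S_k\le k(N+\delta))\le\bigl(\tfrac{N+1}{N+1-Nz}\bigr)^k e^{-\|\gamma^k\|^2\Theta(\delta,N)}$. This is informative only when $\|\gamma^k\|^2$ is a sufficiently large multiple of $k$. Fed into Theorem~\ref{thm:main}, it would force $\rho_k^2$ of order $k$ and give only a linear lower bound on $\log M_k$. Your last paragraph should therefore end with this counterexample and the weakened statement, not with an appeal to an intended normalisation.
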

\begin{IEEEproof}[Proof of Proposition \ref{prop:error-probabilities}]
Fix \(m\). If signature \(m\) is transmitted, then the received state is
\[
S_N^k(\alpha_m^k)=S_N(\alpha_{m,1})\otimes\ldots\otimes S_N(\alpha_{m,k}).
\]
By unitary invariance of the trace,
\begin{align}
\tr\!\bigl(S_N^k(\alpha_m^k)\Pi_m\bigr) &= \tr\!\bigl(S_N^k(0)\,P_{k,\delta}\bigr)\\
    &= \sum_{n^k\in\mathcal T_{N+\delta}^k}p(n_1|0)\cdot\ldots\cdot p(n_k|0)
\end{align}
hence $ \lambda_{1,k}
=
1-\operatorname{tr}\!\bigl(S_N^k(\alpha_m^k)\Pi_m\bigr)
=
1-\operatorname{tr}\!\bigl(S_N^k(0)\,P_{k,\delta}\bigr)
$.
The tailbound for correctly detecting the signature yields
\begin{equation}
\lambda_{1,k}\le e^{-k\Lambda(\delta,N)}.
\end{equation}

Now let \(m'\neq m\). Again by unitary invariance,
$
\operatorname{tr}\!\bigl(S_N^k(\alpha_{m'}^k)\Pi_m\bigr)
=
\operatorname{tr}\!\bigl(S_N^k(\alpha_{m'}^k-\alpha_m^k)\,P_{k,\delta}\bigr).
$
Set $
\Delta_{m,m'}^k:=\alpha_{m'}^k-\alpha_m^k.$
Applying the tailbound for misdetection of signatures with $\gamma^k:=\Delta_{m,m'}$  gives
\begin{equation}
\operatorname{tr}\!\bigl(S_N^k(\Delta_{m,m'}^k)\,P_{k,\delta}\bigr)
\le
\exp\!\left(-\Theta(\delta,N)\|\Delta_{m,m'}^k\|^2\right).
\end{equation}
Therefore
\begin{equation}
\lambda_{2,k}
\le
\exp\!\left(
-\Theta(\delta,N)\,
\min_{m\neq m'}\|\alpha_m^k-\alpha_{m'}^k\|^2
\right).
\end{equation}
If the minimum distance is at least \(2\rho_k\), then $
\lambda_{2,k}\le e^{-4\rho_k^2\Theta(\delta,N)}.$
This proves the claim.
\end{IEEEproof}

\section{Explicit Packing Bound via Metric Entropy}

We now derive an explicit lower bound on the number of users using metric entropy methods. Identify $\mathbb{C}^k$ with $\mathbb{R}^{2k}$, and let $B_{2k}(R) := \{x \in \mathbb{R}^{2k} : \|x\| \le R\}$ denote the closed Euclidean ball of radius $R$. Choosing signatures inside $B_{2k}(\sqrt{k\cdot E})$ ensures the energy constraint. For a metric space $(X,d)$ and $\varepsilon>0$, let $N(X,\varepsilon)$ denote the covering number, i.e., the smallest number of closed balls of radius $\varepsilon$ needed to cover $X$. Let $P(X,\varepsilon)$ denote the packing number, i.e., the largest cardinality of a subset with pairwise distances at least $\varepsilon$. Standard relations between packing and covering numbers imply
\begin{equation}\label{eq:pack}
P(X,2\varepsilon)\le N(X,\varepsilon)\le P(X,\varepsilon).
\end{equation}

We use the standard entropy estimate for Euclidean balls. In the present setting, it is enough to use the bounds
\begin{align}\label{eqn:basic-bounds}
\left(\frac{R}{\varepsilon}\right)^d
\leq
N(B_d(R),\varepsilon)
\leq
\left(1+\frac{2R}{\varepsilon}\right)^d,
\end{align}
which are classical and may be found, for example, in Szarek's discussion of metric entropy estimates for homogeneous spaces \cite{Szarek1998}. For completeness, we extract the consequence we need.
In particular, applying the lower bound in \eqref{eqn:basic-bounds} with $\varepsilon = 2\rho$ yields $N(B_d(R), 2\rho) \ge \left(\frac{R}{2\rho}\right)^d.$

\begin{lemma}
Let $d\in\mathbb{N}$, $R>0$, and $\rho>0$. Then there exists a set $
\{x_1,\ldots,x_M\}\subseteq B_d(R)
$
such that
\begin{equation}
\|x_i-x_j\| \ge 2\rho,
\qquad i\neq j,
\end{equation}
and
\begin{equation}
M \ge \left(\frac{R}{2\rho}\right)^d.
\end{equation}
\end{lemma}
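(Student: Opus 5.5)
The plan is to take a maximal $2\rho$-separated subset of $B_d(R)$ and show it is automatically a $2\rho$-net, so its cardinality is at least a covering number; the volumetric lower bound in \eqref{eqn:basic-bounds} then finishes the argument.

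First, fix a maximal subset $\{x_1,\ldots,x_M\}\subseteq B_d(R)$ whose points are pairwise at distance at least $2\rho$. To see that such a finite maximal set exists, note that the open balls $B(x_i,\rho)$ are pairwise disjoint and all lie in $B_d(R+\rho)$. Comparing volumes gives
\begin{equation}
M\le\left(\frac{R+\rho}{\rho}\right)^d .
\end{equation}
So every separated set is finite, its size is bounded, and one of maximal cardinality exists. Its cardinality equals $P(B_d(R),2\rho)$.

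Second, maximality implies a covering property. If some $y\in B_d(R)$ had $\|y-x_i\|>2\rho$ for every $i$, we could add $y$ to the set and obtain a larger separated set, a contradiction. Hence the closed balls of radius $2\rho$ around $x_1,\ldots,x_M$ cover $B_d(R)$, and so
\begin{equation}
M\ge N(B_d(R),2\rho).
\end{equation}
This is exactly the right-hand inequality of \eqref{eq:pack} with $\varepsilon=2\rho$.

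Third, combine this with the lower bound in \eqref{eqn:basic-bounds}:
\begin{equation}
M\ge N(B_d(R),2\rho)\ge\left(\frac{R}{2\rho}\right)^d .
\end{equation}
If the covering lower bound should be justified rather than cited, I would use the volume argument. Any cover by $N$ balls of radius $\varepsilon$ satisfies $N\varepsilon^d\,\mathrm{vol}(B_d(1))\ge R^d\,\mathrm{vol}(B_d(1))$, and therefore $N\ge (R/\varepsilon)^d$.

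There is no real obstacle here. The only point needing care is that the conventions match: the packing uses separation $\ge 2\rho$ and the covering uses closed balls of radius $2\rho$. With these conventions a point at distance exactly $2\rho$ from some $x_i$ is already covered, so the maximality argument goes through with the non-strict inequalities as stated. I also note that $R\ge 2\rho$ is not needed, since for $R<2\rho$ the bound is trivially satisfied by a single point.
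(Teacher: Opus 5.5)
Your proof is correct and follows the same route as the paper. You take a maximal $2\rho$-separated subset of $B_d(R)$, use maximality to show the closed $2\rho$-balls around its points cover $B_d(R)$, and conclude $M\ge N(B_d(R),2\rho)\ge (R/2\rho)^d$ from the volumetric lower bound. The details you add, namely finiteness of the maximal set via the disjoint $\rho$-balls inside $B_d(R+\rho)$ and the volume-comparison proof of the covering bound, are valid and make explicit what the paper takes for granted.
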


\begin{IEEEproof}
Consider a maximal subset $\{x_1,\ldots,x_M\}\subseteq B_d(R)$ such that
\[
\|x_i-x_j\| \ge 2\rho, \qquad i\neq j.
\]
By maximality, the closed balls of radius $2\rho$ centered at the $x_i$ cover $B_d(R)$, and therefore $
N(B_d(R),2\rho)\le M.
$
On the other hand, the volumetric lower bound on covering numbers gives
\[
N(B_d(R),2\rho)\ge \left(\frac{R}{2\rho}\right)^d.
\]
Combining the two inequalities yields
\[
M \ge \left(\frac{R}{2\rho}\right)^d.
\]
This proves the claim.
\end{IEEEproof}

\begin{remark}
The bound is optimal up to constant factors in the radius term for Euclidean balls. In particular, for our purposes it captures the correct exponential dependence on the ambient dimension $d=2k$.
\end{remark}

Applying the lemma with $d=2k$ and $R=\sqrt{k\cdot E}$ yields the desired code-size estimates:
\begin{IEEEproof}[Proof of Theorem \ref{thm:main}]
By the packing lemma with $d = 2k$, $R = \sqrt{k\cdot E}$, and $\rho = \rho_k$,
there exists a set $\{\alpha_1^k,\dots,\alpha_{M_k}^k\} \subseteq \mathbb{C}^k \cong \mathbb{R}^{2k}$
such that
\[
\|\alpha_m^k\| \le \sqrt{k\cdot E}
\]
for every $m$, and
\begin{equation}
\|\alpha_m^k-\alpha_{m'}^k\| \ge 2\rho_k,\qquad m\neq m',
\end{equation}
with cardinality
\begin{equation}
M_k \ge \left(\frac{\sqrt{k\cdot E}}{2\rho_k}\right)^{2k}
=
\left(\frac{k\cdot E}{4\rho_k^2}\right)^k.
\end{equation}

Assign these vectors as user signatures. The energy constraint holds because
\begin{equation}
\|\alpha_m^k\|^2 \le k\cdot E.
\end{equation}

By Proposition~1,
\begin{equation}
\lambda_{1,k} \le e^{-k\Lambda(\delta,N)},
\end{equation}
\begin{equation}
\lambda_{2,k}
\le
\exp\!\left(
-\Theta(\delta,N)\min_{m\neq m'}\|\alpha_m^k-\alpha_{m'}^k\|^2
\right).
\end{equation}
Since the minimum pairwise distance is at least $2\rho_k$, we obtain
\begin{equation}
\lambda_{2,k}
\le
\exp\!\left(-\Theta(\delta,N)(2\rho_k)^2\right)
=
e^{-4\rho_k^2\Theta(\delta,N)}.
\end{equation}

This proves the theorem.
\end{IEEEproof}

\begin{IEEEproof}[Proof of Theorem \ref{thm:converse}]
Let $\rho_m := S_N^k(\alpha_m^k)$ and let $\{D_m\}_{m=1}^{M_k}$ be the corresponding tests. For every $m\neq m'$,
\begin{align}
\frac12\|\rho_m-\rho_{m'}\|_1
&\ge \tr(D_m\rho_m)-\tr(D_m\rho_{m'}) \\
&\ge 1-\lambda_{1,k}-\lambda_{2,k}.
\end{align}
By the Fuchs--van de Graaf inequality,
\begin{equation}
F(\rho_m,\rho_{m'})
\le 1-(1-\lambda_{1,k}-\lambda_{2,k})^2.
\end{equation}
For displaced thermal states,
\begin{equation}
F(\rho_m,\rho_{m'})
=\exp\!\left(-\frac{\|\alpha_m^k-\alpha_{m'}^k\|^2}{2N+1}\right),
\end{equation}
which implies
\begin{equation}
\|\alpha_m^k-\alpha_{m'}^k\|^2
\ge (2N+1)\log\!\frac{1}{1-(1-\lambda_{1,k}-\lambda_{2,k})^2}.
\end{equation}
Let 
\begin{equation}
r_k:=\min_{m\neq m'}\|\alpha_m^k-\alpha_{m'}^k\|.
\end{equation}
Then
\begin{equation}
r_k^2\ge (2N+1)\log\!\frac{1}{1-(1-\lambda_{1,k}-\lambda_{2,k})^2}.
\end{equation}
Since $\|\alpha_m^k\|^2 \le k\cdot E$, the signatures form an $r_k$-packing of
$B_{2k}(\sqrt{k\cdot E}) \subset \mathbb{R}^{2k}$.
By the packing–covering relation \eqref{eq:pack} and the upper bound in \eqref{eqn:basic-bounds},
\begin{align}
M_k 
&\le P(B_{2k}(\sqrt{k\cdot E}),r_k) \\
&\le N(B_{2k}(\sqrt{k\cdot E}),r_k/2) \\
&\le \left(1+\frac{4\sqrt{k\cdot E}}{r_k}\right)^{2k}.
\end{align}

and therefore
\begin{align}
M_k 
&\le \left(
1+\frac{4\sqrt{k.E}}{\sqrt{(2N+1)\log(1/(4\delta_k))}}
\right)^{2k}
\end{align}
\end{IEEEproof}
\section{Near-{$k\log k$} Scaling}

We now choose $\rho_k$ explicitly.
\begin{corollary}[Near-$k \log k$ achievability]
Fix $E > 0$ and $\gamma > 0$. Let $\rho_k^2 = \gamma \log k.$
Then for all sufficiently large $k$ there exists a deterministic
multi-user identification scheme satisfying
\begin{equation}
\lambda_{1,k} \le e^{-k\Lambda(\delta,N)},
\qquad
\lambda_{2,k} \le k^{-4\gamma\Theta(\delta,N)},
\end{equation}
and
\begin{equation}
M_k \ge \left(\frac{k\cdot E}{4\gamma \log k}\right)^k.
\end{equation}
Equivalently,$
\log M_k \ge k\log k - k\log\log k + k\log\!\left(\frac{E}{4\gamma}\right).
$
In particular,$
\log M_k = k\log k - k\log\log k + O(k).
$
\end{corollary}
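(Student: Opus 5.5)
The plan is to specialize Theorem~\ref{thm:main} to the choice $\rho_k^2=\gamma\log k$ and then simplify. First I need the hypothesis $2\rho_k\le\sqrt{k\cdot E}$ of Theorem~\ref{thm:main}. With $\rho_k^2=\gamma\log k$ this becomes $4\gamma\log k\le kE$. Since $\log k/k\to0$, this holds for all $k\ge k_0(E,\gamma)$. This is the only place where ``sufficiently large $k$'' enters; it also guarantees $\log k>0$, so that $\rho_k>0$. Theorem~\ref{thm:main} then supplies signatures in $B_{2k}(\sqrt{k\cdot E})$ with pairwise distance at least $2\rho_k$, together with the tests $\Pi_m$ of Proposition~\ref{prop:error-probabilities}.

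Next I read off the error bounds. The first-kind bound $\lambda_{1,k}\le e^{-k\Lambda(\delta,N)}$ does not depend on $\rho_k$ and is inherited unchanged. For the second kind,
\begin{equation*}
\lambda_{2,k}\le e^{-4\rho_k^2\Theta(\delta,N)}=e^{-4\gamma\Theta(\delta,N)\log k}=k^{-4\gamma\Theta(\delta,N)}.
\end{equation*}
This requires the logarithm in $\rho_k^2=\gamma\log k$ to be the natural one, matching the exponential in Proposition~\ref{prop:error-probabilities}. If $\log$ denotes $\log_2$ instead, the exponent picks up the factor $\ln 2$, which only rescales $\gamma$. I will fix the convention explicitly, since it is the one place where a careless reading could break the identity. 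The code size is $M_k\ge(kE/(4\rho_k^2))^k=(kE/(4\gamma\log k))^k$.

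Finally, taking logarithms gives
\begin{equation*}
\log M_k\ge k\log k-k\log\log k+k\log\!\left(\tfrac{E}{4\gamma}\right).
\end{equation*}
This gives the lower half of the $O(k)$ statement. For the matching upper bound I will invoke Theorem~\ref{thm:converse} with $\delta_k:=\max\{\lambda_{1,k},\lambda_{2,k}\}$. For large $k$ this is at most $k^{-c}$ with $c:=4\gamma\Theta(\delta,N)>0$, since the exponential first-kind term is eventually dominated by the polynomial one. Then $\log(1/(4\delta_k))\ge c\log k-\log 4$, and the converse yields
\begin{equation*}
\log M_k\le 2k\log\!\bigl(1+C\sqrt{k/\log k}\bigr)=k\log k-k\log\log k+O(k).
\end{equation*}

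The main obstacle is bookkeeping rather than mathematics: keeping the log base consistent across the achievability and converse bounds, and checking that $\Theta(\delta,N)>0$. Positivity of $\Theta$ holds for $\delta>0$, because the numerator $1-(N+1)^{-1/(N+\delta)}$ is positive and the denominator exceeds the numerator. This positivity ensures the polynomial decay is genuine.
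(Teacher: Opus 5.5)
Your proof of the achievability claims is the paper's own. You check $4\gamma\log k\le kE$ for large $k$, substitute $\rho_k^2=\gamma\log k$ into Theorem~\ref{thm:main}, and take logarithms. Two of your remarks make explicit what the paper leaves implicit: you need $k\ge 2$ so that $\rho_k>0$, and $e^{-4\gamma\Theta\log k}=k^{-4\gamma\Theta}$ needs the natural logarithm.

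You differ from the paper only on the final clause. The paper says $\log M_k=k\log k-k\log\log k+O(k)$ ``follows immediately'' from the lower bound, and defers the upper bound to the later scaling-law theorem. You read the equality as two-sided and close it here by applying Theorem~\ref{thm:converse} to the constructed code with $\delta_k=k^{-c}$. That is correct, and it proves more: no code at these error levels can exceed this order, not just the one you built. It does, however, cost you hypotheses that the corollary itself does not need.
\begin{itemize}
\item You need $c=4\gamma\Theta(\delta,N)>0$. This requires $N>0$ as well as $\delta>0$, contrary to what you state: at $N=0$ the numerator of $\Theta$ is $1-1^{-1/\delta}=0$. This is harmless, since $\Lambda$ is only defined for $N>0$.
\item You also use $\Lambda(\delta,N)>0$ without comment, so that the first-kind term is eventually dominated. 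It is true, because $\Lambda$ is the relative entropy between geometric distributions with means $N+\delta$ and $N$, but it should be said.
\end{itemize}

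For this particular code there is a cheaper, purely geometric upper bound. The signatures are $2\rho_k$-separated points of $B_{2k}(\sqrt{kE})$, so disjoint balls of radius $\rho_k$ around them fit inside $B_{2k}(\sqrt{kE}+\rho_k)$. This gives $M_k\le(1+\sqrt{kE}/\rho_k)^{2k}$ with no reference to the error exponents. Alternatively, you can discard users until $M_k=\lceil (kE/(4\gamma\log k))^k\rceil$, since a sub-code inherits the error bounds.
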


\begin{IEEEproof}
Substituting
$
\rho_k^2 = \gamma \log k$
into the first-kind error bound of Theorem \ref{thm:main} gives
\begin{equation}
\lambda_{1,k} \le e^{-k\Lambda(\delta,N)}.
\end{equation}
Substituting the same choice into the second-kind error bound of
Theorem \ref{thm:main} gives
\begin{equation}
\lambda_{2,k}
\le
e^{-4\gamma\Theta(\delta,N)\log k}
=
k^{-4\gamma\Theta(\delta,N)}.
\end{equation}
Since $\rho_k^2 = \gamma \log k = o(k)$,
the condition $
2\rho_k \le \sqrt{k\cdot E}$
holds for all sufficiently large $k$.

Hence,
\begin{equation}
M_k \ge \left(\frac{k\cdot E}{4\gamma\log k}\right)^k.
\end{equation}
Taking logarithms gives
\begin{equation}
\log M_k
\ge
k\log k + k\log E - k\log(4\gamma) - k\log\log k.
\end{equation}
Rearranging,
\begin{equation}
\log M_k \ge k\log k - k\log\log k + k\log\!\left(\frac{E}{4\gamma}\right).
\end{equation}
The final $O(k)$ statement follows immediately.
\end{IEEEproof}

Combining the previous two corollaries gives the main scaling law.
\begin{theorem}[Order-optimal scaling law]
Fix $E>0$ and require
\[
\max\{\lambda_{1,k},\lambda_{2,k}\}\le \delta_k
\]
with $\delta_k$ vanishing polynomially in $k$. Then the maximal number of identifiable
users in the coherent-state signature model satisfies
\begin{equation}
\log M_k = k \log k - k \log \log k + O(k).
\end{equation}
\end{theorem}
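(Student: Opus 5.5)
The proof has two halves: a lower bound from the near-$k\log k$ achievability corollary and an upper bound from Theorem~\ref{thm:converse}. Write $\delta_k = k^{-c}$ for some fixed $c>0$; more generally, assume $\log(1/\delta_k)=\Theta(\log k)$.

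\textbf{Achievability.} Fix $\delta>0$ in the threshold projector, so that $\Lambda(\delta,N)>0$ and $\Theta(\delta,N)>0$. Choose
\[
\gamma := \frac{c}{4\Theta(\delta,N)}
\]
in the near-$k\log k$ corollary. This gives $\lambda_{2,k}\le k^{-c}=\delta_k$. The first-kind error satisfies $\lambda_{1,k}\le e^{-k\Lambda(\delta,N)}$, which is below $k^{-c}$ for all large $k$. Hence for all sufficiently large $k$ the scheme is admissible, and
\[
\log M_k \ge k\log k - k\log\log k + k\log\!\left(\tfrac{E}{4\gamma}\right),
\]
which is $k\log k - k\log\log k - O(k)$. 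Finitely many small $k$ do not affect the asymptotic statement.

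\textbf{Converse.} Start from Theorem~\ref{thm:converse}. For large $k$ we have $\delta_k<1/4$, so $L_k:=\log(1/(4\delta_k)) = c\log k - \log 4 = \Theta(\log k)$. Then
\[
\log M_k \le 2k\log\!\left(1+\frac{4\sqrt{kE}}{\sqrt{(2N+1)L_k}}\right).
\]
Since the ratio tends to infinity, use $\log(1+x)\le \log x + x^{-1}$. This gives
\[
\log M_k \le 2k\cdot\tfrac12\left[\log k - \log L_k + \log\tfrac{16E}{2N+1}\right] + o(k),
\]
that is, $\log M_k \le k\log k - k\log\log k + O(k)$, using $\log L_k = \log\log k + O(1)$.

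Combining the two bounds yields $\log M_k = k\log k - k\log\log k + O(k)$.

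\textbf{Main obstacle.} The main obstacle is the exponent bookkeeping in the converse. The factor $2k$ from the real dimension $2k$ must be exactly cancelled by the square root. Otherwise one would get $2k\log k$, which would contradict the achievability bound. The calculation above checks that the cancellation works.

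A second delicate point concerns the converse proof itself. Its step bounding $\log\frac{1}{1-(1-\lambda_1-\lambda_2)^2}$ below by $\log(1/(4\delta_k))$ requires
\[
1-(1-2\delta_k)^2 \le 4\delta_k ,
\]
which holds. I would also note that the converse fidelity formula has a constant $2N+1$, while the achievability bound has constant $\Theta(\delta,N)$. These constants only affect the $O(k)$ term, so they do not matter for the stated scaling.
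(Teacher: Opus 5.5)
Your proposal is the paper's own proof. The lower bound comes from Corollary~2 with $\gamma$ tuned so that $k^{-4\gamma\Theta(\delta,N)}\le\delta_k$, and the upper bound from Theorem~\ref{thm:converse}. Your bookkeeping is correct and fills in what the paper leaves implicit: the choice $\gamma=c/(4\Theta(\delta,N))$, the check $1-(1-2\delta_k)^2\le 4\delta_k$, and the expansion in which the square root halves the exponent $2k$ while $\log L_k=\log\log k+O(1)$.

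One caveat applies equally to the paper's argument. For $N>0$ your achievability half is only as sound as Corollary~2, and that corollary fails.

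\emph{The threshold test does not reject at this separation.} Fix $\delta>0$ and take a pair with $\|\alpha_m^k-\alpha_{m'}^k\|^2=O(\log k)$. Such pairs must exist in any code of size $(kE/(4\gamma\log k))^k$, by the same packing bound used in Theorem~\ref{thm:converse}. Under $S_N^k(\alpha_{m'}^k-\alpha_m^k)$ the photon count $\mathcal S_k$ has mean $kN+O(\log k)$ and variance $O(k)$, while the threshold is $k(N+\delta)$. Chebyshev therefore gives $\lambda_{2,k}\ge 1-O(1/k)$ for the test of Proposition~\ref{prop:error-probabilities}, not $k^{-4\gamma\Theta(\delta,N)}$.

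\emph{No test achieves the claimed first-kind error either.} Data processing gives
$(1-\lambda_{2,k})\ln(1/\lambda_{1,k})\le \|\alpha_m^k-\alpha_{m'}^k\|^2\ln\frac{N+1}{N}+\ln 2=O(\log k)$.
This rules out $\lambda_{1,k}\le e^{-k\Lambda(\delta,N)}$ at such a separation, whatever test is used.

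\emph{Shrinking $\delta$ does not help.} If $\delta$ is allowed to depend on $k$, polynomially small errors under the threshold test force $\|\alpha_m^k-\alpha_{m'}^k\|^2\gtrsim\sqrt{k\log k}$. That yields only $\log M_k\approx\tfrac12 k\log k$.

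Consequently, the lower bound (yours and the paper's) is established only for $N=0$, and there not via Corollary~2. In that case the projection test $\Pi_m=|\alpha_m^k\rangle\langle\alpha_m^k|$ together with the overlap lemma gives $\lambda_{1,k}=0$ and $\lambda_{2,k}\le e^{-4\rho_k^2}$ directly. Note also that $\Theta(\delta,0)=0$, so $\gamma$ must then be chosen from this bound rather than as $c/(4\Theta(\delta,N))$.
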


\begin{IEEEproof}
The lower bound follows from Corollary~2, which constructs
schemes achieving $
\log M_k \ge k \log k - k \log \log k + O(k)$.
The upper bound follows from Theorem 2 applied to schemes with
$\max\{\lambda_{1,k},\lambda_{2,k}\}\le \delta_k$, where $\delta_k$
vanishes polynomially in $k$.
Combining the two bounds yields the claimed scaling law.
\end{IEEEproof}

\begin{remark}
The achieved scaling is not doubly exponential. However, it is still superlinear and differs from the $k\log k$ law only by the lower-order term $k\log\log k$.
\end{remark}

\section{Comparison to Heterodyne Detection}

We now compare the ideal coherent-state projection model with a physically realizable receiver, namely heterodyne detection.

Under heterodyne detection, a coherent-state input $|\alpha^k\rangle$ induces a classical output $Z^k \in \mathbb{C}^k$ with conditional distribution
\begin{equation}
P(z^k \mid \alpha^k) = \pi^{-k}
\exp\!\left(-\|z^k - \alpha^k\|^2\right)
\label{eq:heterodyne_density}
\end{equation}
This leads to a situation where equivalently, the receiver observes  Gaussian output with a minimum amount of so-called ``shot noise'' \cite{banaszekQuantumLimits}:
\begin{equation}
Z^k = \alpha^k + W^k,
\qquad
W^k \sim \mathcal{CN}(0, I_k).
\label{eq:heterodyne_awgn}
\end{equation}
so that heterodyne detection transforms the bosonic identification problem into a deterministic identification problem over a classical Gaussian channel with continuous output.

Consequently, any deterministic identification scheme based on heterodyne detection can be viewed as a classical identification scheme with input vectors $\alpha^k \in \mathbb{C}^k$ satisfying the energy constraint $\|\alpha^k\|^2 \le k\cdot E$, and observation governed by \eqref{eq:heterodyne_awgn}. The achievable performance and converse bounds are therefore determined by the corresponding results for Gaussian identification channels.

Such problems have recently been studied in \cite{colomerGaussian}, where deterministic identification over general linear Gaussian channels with continuous output is analyzed via rate--reliability tradeoffs. In particular, both achievable bounds and converse bounds can be expressed in terms of hypothesis-testing quantities for the induced Gaussian output distributions.


\section{Discussion}

The construction is geometric and deterministic. It avoids the shared-codebook coupling that often complicates multi-user identification problems by assigning one coherent-state signature per user. The detection rule is equally simple: each receiver performs a binary test against its own signature state. 

The resulting bounds are explicit and transparent. The key design parameter is the packing radius $\rho_k$, which directly controls both code size and error probability. The achievability result shows that one can support $
M_k \asymp \left(\frac{k}{\log k}\right)^k
$
forcing the identification errors to decay polynomially. The converse shows that this order cannot be improved within the present coherent-state signature framework.

The analysis is also physically interpretable. Coherent-state overlap is a natural bosonic distinguishability measure, and the energy constraint corresponds to restricting the signature vectors to a Euclidean ball. Thus the entire identification problem becomes a concrete high-dimensional geometry problem in phase space.

There are several natural directions for future work. First, one may replace the idealized projective test by more physically constrained measurements and quantify the resulting performance loss. Second, the present model can be generalized to noisy bosonic channels, where attenuation and thermal noise perturb the overlap geometry. Third, one may ask whether structured signature sets, rather than generic entropy-optimal packings, lead to improved constants or more practical implementations.

\section{CONCLUSION}

We presented a deterministic multi-user identification
framework for bosonic systems based on coherent-state sig-
natures. The central observation is that geometric separation
in phase space yields explicit control of the identification
errors under displaced thermal noise. Combining this with
metric entropy estimates for Euclidean balls yields explicit
lower and upper bounds on the number of users.

For every separation parameter $\rho_k$, we constructed
schemes with
\begin{equation}
M_k \ge \left(\frac{k\cdot E}{4\rho_k^2}\right)^k,
\ \
\lambda_{1,k} \le e^{-k\Lambda(\delta,N)},
\ \
\lambda_{2,k} \le e^{-4\rho_k^2\Theta(\delta,N)}.
\end{equation}
We also proved the matching converse upper bound
\begin{equation}
M_k \le
\left(
1+\frac{4\sqrt{k\cdot E}}{\sqrt{(2N+1)\log\!\bigl(1/(4\delta_k)\bigr)}}
\right)^{2k},
\end{equation}
showing that the achievable scaling is order-optimal in the
coherent-state signature model.

Choosing $\rho_k^2=\gamma\log k$ gives $
\lambda_{2,k}\le k^{-4\gamma\Theta(\delta,N)}
$
and
$
\log M_k = k\log k - k\log\log k + O(k).
$
This provides a clean and physically grounded near-$k\log k$
scaling law for deterministic multi-user identification in
bosonic channels.

\section{Tailbounds}
\label{sec:tailbounds}
\begin{IEEEproof}[Proof of Lemma \ref{lem:tailbounds}]
Since our code-words have energy bounded as $\|\alpha^k\|^2\leq k\cdot E$ and since the expected received energy if a receiver receives its signature is $S_N(0)$ and otherwise given by $N+|\alpha_i|^2$, the average received energy for a length-$k$ signal obeying the power constraint \eqref{eq:power-constraint} will be $k\cdot N$ for the correct code-word and will be lower bounded by $k\cdot(E+\epsilon)$ if the signal is intended to wake up a different receiver. The distributions $p(n|\alpha)$ are \cite[Eq. 2.10]{Marian_2007}) given by
\begin{align}
    p(n|\alpha) 
        &= \tfrac{1}{N+1}\left(\tfrac{N}{N+1}\right)^n e^{-\frac{|\alpha|^2}{N+1}}L_n\left(-\tfrac{|\alpha|^2}{N(N+1)}\right)
\end{align}
with $L_n$ being a Laguerre polynomial. In order to understand the large deviation behaviour of $p$ it is insightful to use its moment generating function which can be shown to equal
\begin{align}
    G(z,\alpha) &= \sum_{n=0}^\infty p(n)z^n\\
        &= \frac{1}{\pi N}\int e^{-|\gamma|^2/N}e^{-|\gamma+\alpha|^2(1-z)}d\gamma.
\end{align}
Using \cite[Eq. 5.51]{KamZhangFeng2023CoherentStates} we know the integral inequality
\begin{align}
    \frac{1}{\pi}\int_{\mathbb C} e^{-a\cdot |z|^2 + b\cdot z + c\cdot z^*}dz = \frac{1}{a}e^{b\cdot c/a}
\end{align}
we arrive at the equation 
\begin{align}
    G(z,\alpha) = \tfrac{1}{N+1-N\cdot z}\exp\big(-|\alpha|^2\tfrac{1-z}{N+1-N\cdot z}\big).
\end{align}
The moment generating function of $\mathcal S_k$ is given by
\begin{align}
    G_k(z,\alpha^k)=\mathbb E(S_k^z)=\tfrac{\exp\left(-\tfrac{\|\alpha^k\|^2\cdot(1-z)}{N+1-N\cdot z}\right)}{(N+1+N\cdot z)^k}.
\end{align}
\subsubsection{Correctly Detecting the Signature}
    From Markov's inequality, it follows that for every $s>0$
    \begin{align}
        \mathbb P(S_k\geq \xi)&=\mathbb P(e^{s\cdot S_k}\geq e^{s\cdot \xi})\\
            &\leq e^{-s\xi}\mathbb E(e^{s\cdot S_k})\\
            &= e^{-s\xi}G_k(e^{s}).
    \end{align}
    Hence, 
    \begin{align}
        \tr(P& [\otimes_{i}S_N(\alpha_i)]) = \mathbb P(S_k\geq k\cdot(N+\delta))\\
            &\leq e^{s\cdot k\cdot(N+\delta)}\frac{1}{(N+1+N\cdot e^{s})^k}\nonumber\\
            &\qquad\times\exp\big(-\|\alpha^k\|^2\cdot\frac{1-e^{s}}{N+1-N\cdot e^{s}}\big).
    \end{align}
    If $\|\alpha^k\|^2=0$ (true if receiver $m$ gets signal $m$) we have 
 
    \begin{align}
        \tr(P [S_N(0)^{\otimes k}])&\leq 
        e^{-k\cdot\|\alpha^k\|^2\cdot \Lambda(\delta,N)}.
    \end{align}    
    
\subsubsection{Misdetection of Signatures}
    Assume $\|\alpha^k\|_2^2\neq0$, (true if receiver $m$ receives code-word $m'\neq m$). Then,
    \begin{align}
        \tr((\eins-P)& [\otimes_{i}S_N(\alpha_i)]) = \mathbb P(S_k\leq k(N+\delta))\\
            &= \mathbb P(e^{-s\cdot S_k}\leq e^{-s\cdot k(N+\delta)})\\
            &\leq e^{s\cdot k(N+\delta)}(N+1-N\cdot e^{-s})^{-k}\nonumber\\
            &\qquad\times\exp\big(-\epsilon\cdot\frac{1-e^{-s}}{N+1-N\cdot e^{-s}}\big)\\
            &\leq e^{-k\cdot\|\alpha^k\|^2\cdot\Theta(\delta,N)}
    \end{align}
    can be shown by choosing $s=s^*$ such that $s^*  (N+\delta)  =\ln(N+1+N\cdot e^{-s^*})$.

\end{IEEEproof}

\bibliographystyle{plain}
\bibliography{bib}

\end{document}